\documentclass[conference]{IEEEtran}
\IEEEoverridecommandlockouts
\usepackage{textcomp}
\usepackage{csquotes}
\usepackage{microtype}
\usepackage{amsmath,amsthm,amssymb,amsfonts}
\usepackage{algorithm}
\usepackage{algpseudocode}
\usepackage[bookmarksdepth=3,linktoc=all,colorlinks=true,urlcolor=blue,linkcolor=blue,citecolor=blue]{hyperref}
\usepackage[capitalize]{cleveref}
\usepackage[style=ieee, sorting=nyt]{biblatex}
\usepackage{xcolor}
\usepackage{cancel}
\addbibresource{uni.bib}


\newtheorem{assumption}{Assumption}
\newtheorem{theorem}{Theorem}

\theoremstyle{definition}
\newtheorem{definition}{Definition}
\theoremstyle{remark}

\def\ddefloop#1{\ifx\ddefloop#1\else\ddef{#1}\expandafter\ddefloop\fi}
\def\ddef#1{\expandafter\def\csname bb#1\endcsname{\ensuremath{\mathbb{#1}}}}
\ddefloop ABCDEFGHIJKLMNOPQRSTUVWXYZ\ddefloop
\def\ddef#1{\expandafter\def\csname bf#1\endcsname{\ensuremath{\mathbf{#1}}}}
\ddefloop ABCDEFGHIJKLMNOPQRSTUVWXYZabcdefghijklmnopqrstuvwxyz\ddefloop
\def\ddef#1{\expandafter\def\csname bs#1\endcsname{\ensuremath{\boldsymbol{#1}}}}
\ddefloop ABCDEFGHIJKLMNOPQRSTUVWXYZabcdefghijklmnopqrstuvwxyz\ddefloop
\def\ddef#1{\expandafter\def\csname sf#1\endcsname{\ensuremath{\mathsf{#1}}}}
\ddefloop ABCDEFGHIJKLMNOPQRSTUVWXYZ\ddefloop
\def\ddef#1{\expandafter\def\csname c#1\endcsname{\ensuremath{\mathcal{#1}}}}
\ddefloop ABCDEFGHIJKLMNOPQRSTUVWXYZ\ddefloop

\newcommand{\eps}{\epsilon}
\newcommand{\T}{\intercal}
\DeclareMathOperator*{\argmax}{arg\,max}
\DeclareMathOperator*{\argmin}{arg\,min}

\begin{document}

\title{\LARGE \bf
Distribution-Free Guarantees for Systems with Decision-Dependent Noise
}

\author{Heling Zhang$^{1}$, Lillian J. Ratliff$^{2}$ and Roy Dong$^{3}$
\thanks{$^{1}$Heling Zhang is with the Department of Electrical and Computer Engineering, University of Illinois at Urbana-Champaign,
        Champaign, IL 61801, USA
        {\tt\small hzhng120@illinois.edu}
        {\tt\small +1 217-898-9702}}%
\thanks{$^{2}$Lillian J. Ratliff is with the Department of Electrical and Computer Engineering, University of Washington,
        Seattle, WA 98195, USA
        {\tt\small ratliffl@uw.edu}
        {\tt\small +1 206-543-6244}}%
\thanks{$^{3}$Roy Dong is with the Department of Industrial and Enterprise Systems Engineering, University of Illinois at Urbana-Champaign,
        Champaign, IL 61801, USA
        {\tt\small roydong@illinois.edu}
        {\tt\small +1 217-265-4062}
        }%
}

\maketitle

\begin{abstract}
    In many real-world dynamicalsystems, obtaining precise models of system uncertainty remains a challenge. It may be difficult to estimate noise distributions or robustness bounds, especially when the distributions/robustness bounds vary with different control inputs in unknown ways. Addressing this challenge, this paper presents a novel iterative method tailored for systems with decision-dependent noise without prior knowledge of the distributions. Our approach finds the open-loop control law that minimizes the worst-case loss, given that the noise induced by this control lies in its $(1 - p)$-confidence set for a predetermined $p$. At each iteration, we use a quantile method inspired by conformal prediction to empirically estimate the confidence set shaped by the preceding control law. These derived confidence sets offer distribution-free guarantees on the system's noise, guiding a robust control formulation that targets worst-case loss minimization. Under specific regularity conditions, our method is shown to converge to a near-optimal open-loop control. While our focus is on open-loop controls, the adaptive, data-driven nature of our approach suggests its potential applicability across diverse scenarios and extensions.
\end{abstract}

\section{Introduction}
In the study of dynamical systems, the divergence between ideal models and real-world scenarios often manifests as a disturbance. This disturbance, compounded by the lack of knowledge about its statistical properties, presents significant challenges for control tasks. Traditional strategies, including stochastic and robust control, are theoretically effective but typically depend on prior knowledge or assumptions about the disturbance and its effect on the system, which are often unavailable in practice. Modern learning-based methods require less information about the distribution, but usually lack rigorous guarantees.

To address these challenges, we introduce an iterative method that gives distribution-free guarantees. That is, we assume no prior knowledge on the disturbance, even for the case where the disturbance is decision-dependent. The goal of our method is as follows: for a given noisy system with an unknown (potentially decision-dependent) disturbance, we aim to find the open-loop control law that minimizes the worst-case loss, given that the disturbance induced by this control lies in its $(1 - p)$-confidence set for a predetermined $p$.

Our method works as follows: at each iteration, we use a quantile method inspired by conformal prediction to empirically estimate the confidence set shaped by the preceding control law. These derived confidence sets offer empirical constraints on the system disturbance, guiding a robust control formulation that targets worst-case loss minimization. By solving this robust control problem, we obtain a refined control law which enables us to re-estimate the confidence set of the disturbance. Our theoretical analysis shows that this method converges to a near-optimal open-loop control law under specific regularity conditions.

Devising guarantees is a common theme in the study of noisy systems. When the noise is known, classic reachability analysis such as \cite{prajna07} provide provable guarantees of safety. When the noise distribution is unknown, various empirical methods are often used. In \cite{gillula13}, the authors derive safety guarantees by learning the worst-case disturbance online. In \cite{akametalu14}, the authors use a Gaussian process model to learn the system's unknown dynamics and approximates the maximal safe set. Our method adds to this list, providing a new way of deriving safety guarantees from empirical observations.

A key component of our method is inspired by \textit{conformal prediction}, a method widely used for uncertainty quantification in machine learning (\cite{vovk22, angelopoulos22}). In recent years, the application of conformal prediction in control tasks has also found great success. In \cite{lindemann23}, conformal prediction is used to construct predictors with probabilistic guarantees, which is then used to design model predictive controllers. In \cite{muthali23}, the authors uses conformal prediction to calibrate measures of uncertainty in trajectory forecasting models, which are then used to generate reachable sets that facilitates downstream navigation and planning tasks. In \cite{lin23}, the authors uses conformal prediction to verify neural reachable tubes derived with DeepReach (\cite{bansal20}). In \cite{lindemann22}, the authors provides two predictive runtime verification algorithms that use  conformal prediction to quantify prediction uncertainty and compute the probability that the system trajectory violates a signal temporal logic specification. In this work, we use  a quantile method inspired by conformal prediction to compute confidence sets of the noise induced by the current control law, which is then used to formulate a robust control problem that leads to a more refined control law.

Methodologically, our work is inspired by the works on \textit{performative prediction} \cite{perdomo20}, which is the study of decision-dependent distribution shift in learning. Conceptually, our problem shares several similarities with performative prediction: 1) the solutions to our problems alter the noise we encounter, and 2) we both seek the solution that gives the best performance in the noise it induces. As a result, our method adopts a similar iterative structure, in which each iteration builds upon the noise induced by the result from the previous iteration, but eventually converges to a fixed-point under certain regularity conditions.

\section{Problem Statement}
\label{sec_ps}
Consider a system that follows the underlying dynamics given by
\begin{equation}
    \label{eq_sys_under}
    x(t+1)
    =
    f(t, x(t), u(t), \zeta(t))
    \quad
    t = 0, 1, \dots, T-1,
\end{equation}
where $x(t) \in X_t \subseteq \bbR^n$ is the system state, $u(t) \in U_t \subseteq \bbR^m$ is the control input, and $\zeta(t) \in \bbR^n$ is a noise whose distribution depends on $x(t)$ and $u(t)$, i.e. $\zeta(t) \sim \cD(x(t), u(t))$. We assume that we do not have direct knowledge of the system dynamics, but instead, for any given sequence of control law $(u(0), \dots, u(T-1))$ and any initial state $x(0)$, we can take independent sample trajectories $(x_i(0), \dots, x_i(t))_{i=1}^{M}$. Out of these sample trajectories, we can then construct a unbiased estimator of $f(t, x(t), u(t), \zeta(t))$ using various methods. We write such unbiased estimator as $\hat{f}(t, x(t), u(t))$. Let $w(t) := f(t,x(t),u(t),\zeta(t)) - \hat{f}(t, x(t), u(t))$, we can see that $w(t)$ is a random variable whose distribution depends on $x(t)$ and $u(t)$. Then the system dynamics can be written as
\begin{equation}
    \label{eq_sys}
    x(t+1)
    =
    \hat{f}(t, x(t), u(t)) + w(t)
    \quad
    t = 0, 1, \dots, T-1.
\end{equation}
We are mainly concerned with developing open-loop control for the system described by Equation~\ref{eq_sys}, and assume $\hat{f}$ is a known function from this point on.

Without loss of generality, we further assume that the system starts at a fixed state $x(0) = x_0$. With a slight abuse of notation, we use $x, u, w$ to denote associated trajectories
\begin{align*}
    \mathbf{x} &= (x(0), \dots, x(T)) \in \bbR^{n(T+1)},\\
    \mathbf{u} &= (u(0), \dots, u(T-1)) \in \bbR^{mT},\\
    \mathbf{w} &= (w(0), \dots, w(T-1)) \in \bbR^{nT}.
\end{align*}
The objective is to determine an open-loop control law $\mathbf{u} = (u(0), ... u(T-1))$ that optimally governs the system's behavior. To quantify the performance of a control law, we introduce a loss function $J$ dependent on the entire trajectory of states and control inputs. We define $J$ as
\[
    \label{eq_loss}
    J(\mathbf{x}, \mathbf{u})
    :=
    \phi(x(T))
    +
    \sum_{t=0}^{T-1} 
    \varphi(x(t), u(t)).
\]
However, given the stochastic nature of the noise $w(t)$, the loss function $J$ becomes a random object. This makes it impossible to directly minimize the value of the loss function. Traditional approaches either aim to minimize the expected loss, requiring knowledge of the noise distribution, or minimize the worst-case loss for bounded noise, requiring knowledge of the extreme values of the noise. Both approaches require some prior knowledge of the behavior of the noise. This work addresses scenarios where no such prior knowledge is available.

When sampling is feasible, we can construct confidence sets for the noise with probabilistic guarantees using techniques similar to conformal prediction. The primary focus of this work is to minimize the worst-case loss given these confidence sets. Formally, for a designated probability level $p$, let $C_t(\mathbf{u})$ be the confidence set for $w(t)$ such that $w(t) \in C_t(\mathbf{u})$ with probability at least $1 - p/T$. Applying the union bound, this would give us $\mathbf{w} \in C_0(\mathbf{u}) \times \dots \times C_{T-1}(\mathbf{u})$ with probability at least $1-p$. We are interested in the following optimization problem
\begin{equation} \label{eq_p1}
\begin{aligned}
    \min_{\mathbf{u}}\quad 
    &\max_{\mathbf{w}} J(\mathbf{x}(\mathbf{u},\mathbf{w}), \mathbf{u}) \\
    \text{s.t}\quad 
    &\mathbf{w} \in C_0(\mathbf{u}) \times \dots \times C_{T-1}(\mathbf{u}) \\
    &\mathbf{u} \in U_0 \times \dots \times U_{T-1}.
\end{aligned}
\end{equation}
For clarity, we can also equivalently express the loss as $L(\mathbf{w}, \mathbf{u}) = J(\mathbf{x}(\mathbf{u},\mathbf{w}), \mathbf{u})$, emphasizing its dependence on the noise and control. Further, let 
\begin{align*}
    C(\mathbf{u}) &:= C_0(\mathbf{u}) \times \dots \times C_{T-1}(\mathbf{u})  \\
    U &:= U_0 \times \dots \times U_{T-1},
\end{align*}
then we can write the optimization problem in a more concise form
\begin{equation*}\label{eq_p2}
\begin{aligned} 
    \min_{\mathbf{u}} \quad &\max_{\mathbf{w} \in C(\mathbf{u})} L(\mathbf{w}, \mathbf{u})\\
    \text{s.t}\quad &\mathbf{u} \in U.
\end{aligned}
\end{equation*}

\section{Methods}
\label{sec_m}
In this section, we introduce our main method, empirical iterative refinement of performative control (E-IRPC). We call the control performative since it induces changes in the system noise distribution.

\subsection{Empirical Iterative Refinement of Performative Control}
The core of our methodology is an iterative process designed to progressively refine the control law to achieve optimal performance in the presence of noise. Our method consists of the following steps:
\begin{enumerate}
    \item
        \textbf{Initialization:}
        We start by finding the optimal open-loop control 
        for the nominal system (i.e. the system without noise)
        with loss function $J$:
        \begin{equation} \label{eq_init}
        \begin{aligned}
            \min_{\mathbf{u}}\quad 
            &J(\mathbf{x}(\mathbf{u}), \mathbf{u}) \\
            \text{s.t}\quad 
            &x(t+1) = \hat{f}(t, x(t), u(t)) \\
            &\mathbf{u} \in U_0 \times \dots \times U_{T-1}.
        \end{aligned}
        \end{equation}
        We denote the resulting control law as 
        $\mathbf{u}_0 = (u_0(0), \dots, u_0(T-1))$.
    \item
        \textbf{Sampling and Confidence Set Construction:}
        In this step, we fix the control law and compute
        the confidence sets for the induced noise.
        Specifically,
        with the control law from the previous step, we sample 
        ${N_1}$ independent trajectories of system states, represented as
        $\mathbf{x}^{(j)}_1$ for $j = 1, \dots, {N_1}$. Since $\hat{f}$ is known, this gives us 
        ${N_1}$ trajectories of the noise, $\mathbf{w}^{(j)}_1$ for $j = 1, \dots, {N_1}$.
        Notice that since we are using the same control input,
        at each time step $t$, $(w^{(j)}_1(t))_{j=1}^{N_1}$ are i.i.d.
        We can then use quantile regression to construct sets $\hat{C}_t(\mathbf{u})$ such that 
            $\bbP(w(t) \in \hat{C}_t(\mathbf{u})) 
            \ge
            1 - \frac{p}{T}$.
        Again, for simplicity, we denote
        $\hat{C}(\mathbf{u}) := \hat{C}_0(\mathbf{u}) \times \dots \times \hat{C}_{T-1}(\mathbf{u})$.
        The details of such construction will be outlined in the 
        next subsection.
    \item
        \textbf{Robust Control Problem Formulation:}
        In this step, we fix the confidence sets and
        vary the control law. Specifically, 
        with the confidence sets computed in the previous step, 
        we can formulate a robust control problem similar to 
        Equation~\ref{eq_p1}:
        \begin{equation}
            \label{eq_rc}
        \begin{aligned}
            \min_{\mathbf{u}}\quad 
            &\max_{\mathbf{w}} J(\mathbf{x}(\mathbf{u},\mathbf{w}), \mathbf{u}) \\
            \text{s.t}\quad 
            &\mathbf{w} \in \hat{C}_0(u_0) \times \dots \times \hat{C}_{T-1}(u_0) \\
            &\mathbf{u} \in U_0 \times \dots \times U_{T-1}.
        \end{aligned}
        \end{equation}
        We solve this problem to obtain a new control law $\mathbf{u}_1$.
    \item
        \textbf{Iterative Refinement:}
        The new control law is then used as the initial control law
        in Step 2, and the process repeats.
\end{enumerate}
In summary, for $i = 0, 1, ...$, we fix the control law $\mathbf{u}_i$, sample $N_i$ trajectories $(\mathbf{x}^{(j)}_{i+1})_{j = 1}^{N_i}$ and $(\mathbf{w}^{(j)}_{i+1})_{j = 1}^{N_i}$ to construct a confidence set $\hat{C}(\mathbf{u}_i)$, then we fix $\hat{C}(\mathbf{u}_i)$ and formulate a robust control problem to obtain $u_{i+1}$. To simplify our notation, define $\hat{\cA}: U \to U$ as
\begin{equation}
    \label{eq_ir}
    \hat{\cA}(\mathbf{u})
    \in 
    \argmin
    \left \{
        \max
        \{
            L(\mathbf{w}, \mathbf{v})
            :
            \mathbf{w} \in \hat{C}(\mathbf{u})
        \}
        :
        \mathbf{v} \in U
        \right \},
\end{equation}
then our method is summarized in Algorithm~\ref{alg_iter}.

The iterative nature of this method is essential due to the performative aspect of the control: the control law influences the distribution of the noise, necessitating repeated refinement. Additionally, without prior knowledge of how different controls influence this noise, we need to re-collect samples to estimate bounds on the noise whenever we change the control.

\begin{algorithm}
\caption{E-IRPC}\label{alg_iter}
\begin{algorithmic}
\State $u_0 \gets \text{Solution to Problem~\ref{eq_init}}$
\While{$i > 0$}
    \State Sample $(\mathbf{x}^{(j)}_{i+1})_{j = 1}^{N_i}$ and $(\mathbf{w}^{(j)}_{i+1})_{j = 1}^{N_i}$
    \State Construct $\hat{C}(\mathbf{u}_{i-1})$ from samples
    \State $\mathbf{u}_i \gets \hat{\cA}(\mathbf{u}_{i-1})$  \Comment{$\hat{\cA}$ defined in Equation~\ref{eq_ir}.}
\EndWhile
\end{algorithmic}
\end{algorithm}

\subsection{Confidence Set Construction}
\label{sec_cs}
In this section, we will outline the method we use to construct the confidence sets in Step 2. As described in the previous section, for each time step $t$, we have ${N_i}$ samples of $w(t)$. Let's sort their Euclidean norm in the increasing order as follows: $\left |w^{(\eta_1)}(t) \right | 
    \le
    \dots
    \le
    \left |w^{(\eta_{N_i})}(t) \right |$. 
Since these samples are i.i.d, we can easily show that the Euclidean norm of another i.i.d sample $w(t)$ is equally likely to fall between that of any existing samples \cite{angelopoulos22}, this gives us
\begin{equation}
    \label{eq_cp}
    \bbP\left(
        |w(t)|
        \le
        |w^{(\eta_k)}(t)|
    \right)
    =
    \frac{k}{N_i+1}
\end{equation}
for any $k = 1, \dots, N_i$. This allows us to construct the confidence sets with desired probabilistic guarantee. Specifically, for the overall probability level $p$, pick
$
    k 
    = 
    \left\lceil
        (N_i + 1)\left(1 - \frac{p}{T} \right)
    \right\rceil
$
and define
\begin{equation}
    \label{eq_cs}
    \hat{C}_t(\mathbf{u})
    :=
    \left\{
        w(t)
        :
        |w(t)|
        \le
        |w^{(\eta_k)}|
    \right\},
\end{equation}
by Equation~\ref{eq_cp}, we have
\begin{equation*}
    \bbP\left(
        w(t) 
      \mathbf{ } \in
        \hat{C}_t(\mathbf{u})
    \right)
    =
    \frac{
        \left\lceil
            (N_i + 1)\left(1 - \frac{p}{T} \right)
        \right\rceil
    }{
        N_i + 1
    }
    \ge
    1 - \frac{p}{T},
\end{equation*}
which means that Equation~\ref{eq_cs} gives us the desired confidence set. It is worth noting that when constructing these confidence sets, we can replace the Euclidean norm with any other score function. This will allow us to construct confidence sets with the same probabilistic guarantee but different shapes. For example, if we can chose $s(w(t)) := \sqrt{w^\T(t) H w(t)}$ for some positive definite definite $H$, and rank the sample noises according to the score
$
    s\left(
        w^{(\eta_1)}(t)
    \right)
    \le
    \dots
    \le
    s\left(
        w^{(\eta_{N_i})}(t)
    \right)
$, 
the resulting confidence set given by
$
    \tilde{C}_t (\mathbf{u}) 
    = 
    \left\{
        w(t) : 
        s(w(t)) 
        \le
        s\left(
            w^{(\eta_k)}(t)
        \right)
    \right\}
$
will still have the same probabilistic guarantee, but takes the shape of an ellipsoid instead of a ball.

\subsection{Ideal Iterative Refinement of Performative Control}
So far, we have detailed our primary method, which inherently relies on estimating specific properties of the noise distribution using finite samples. Naturally, the algorithm's performance is influenced by the number of samples taken. In this section, we introduce an idealized variant of our method, termed the \textit{Ideal Iterative Refinement of Performative Control} (I-IRPC). This version sidesteps the variability introduced by finite sample trajectories. As we will discuss, the I-IRPC can be conceptualized as the infinite sample counterpart of our E-IRPC.

Formally, let $\mathbf{u}$ be the control input for the system defined by \ref{eq_sys}, and $w(t)$ be the noise at time $t$ whose distribution is given by $\cD(x(t), u(t))$. We define the \textit{ideal confidence set} for $w(t)$ as
\begin{equation}
    \label{eq_ics}
    C_t(\mathbf{u}) = \{w(t) : |w(t)| \le Q_t(\mathbf{u})\},
\end{equation}
where $Q_t(\mathbf{u})$ is the quantile function of $|w(t)|$ defined as
\begin{equation}
    \label{eq_qf}
    Q_t(\mathbf{u})
    :=
  \mathbf{ } \inf
    \left \{
        r
        :
        \bbP(|w(t)| \le r)
        \ge
        1 - 
        \frac{p}{T}
    \right \}
\end{equation}
where $p$ is the overall probability level. The set $C_t(\mathbf{u})$ is the idealized counterpart of $\hat{C}_t(\mathbf{u})$ defined in \ref{eq_cs}, which is impossible to obtain without knowing the exact distribution of $w(t)$. In the I-IRPC, we simply assume that there is an oracle that gives $C_t(\mathbf{u})$ for any $\mathbf{u}$ we choose.

Notice that $Q_t$ depends solely on $\mathbf{u}$ because $\mathbf{u}$ uniquely defines the distribution of $w(t)$. Now, let $C(\mathbf{u}) := C_0(\mathbf{u}) \times \dots \times C_{T-1}(\mathbf{u})$ and define $\cA: U \to U$ as
\begin{equation}
    \label{eq_iir}
    \cA(\mathbf{u}) \in 
    \argmin
    \left\{
        \max\{
            L(\mathbf{w}, \mathbf{u})
            :
            \mathbf{w} \in C(\mathbf{u})
        \}
        :
        \mathbf{u} \in U
    \right\}.
\end{equation}
Then the I-IRPC is given by Algorithm~\ref{alg_iir}.
\begin{algorithm}
\caption{I-IRPC}\label{alg_iir}
\begin{algorithmic}
\State $u_0 \gets \text{Solution to Problem~\ref{eq_init}}$
\While{$i > 0$}
    \State Get $C(\mathbf{u})$ from the oracle
    \State $\mathbf{u}_i \gets \cA(\mathbf{u}_{i-1})$  \Comment{$\cA$ defined in Equation~\ref{eq_iir}.}
\EndWhile
\end{algorithmic}
\end{algorithm}
The I-IRPC gives us a starting point for analysis. Specifically, due to its deterministic nature, it allows us to make the following key definitions.
\begin{definition}[Performatively Stable Control]
    \label{def_ps}
       We say $\mathbf{u}_{PS} \in U$ is a \textit{performatively stable control} if
    $
        \cA(\mathbf{u}_{PS}) = \mathbf{u}_{PS}
    $.
\end{definition}
\begin{definition}[Performatively Optimal Control]
    \label{def_po}
    We say $\mathbf{u}_{PO} \in U$ is a \textit{performatively optimal control} if 
    \[
        \mathbf{u}_{PO}
      \mathbf{ } \in
        \argmin_{\mathbf{u} \in U}
        \max_{\mathbf{w} \in C(\mathbf{u})}
        L(\mathbf{w}, \mathbf{u}),
    \]
    where $C(\mathbf{u})$ is defined in Equation~\ref{eq_ics}.
\end{definition}
Conceptually, the performatively stable control is the control law that performs optimally in the noise it induces. The performatively optimal control, on the other hand, gives the best performance among all feasible control laws in the presence of self-induced noise. Notice that the performative optimal control, in contrast to performative stable control, may not be the optimal control law in the noise that it induces. As we will see shortly, these concepts play a central role in our theoretical analysis of our proposed method.

Now, let's revisit E-IRPC from the lens of I-IRPC. Define the empirical quantile function by
\begin{equation*}
\begin{aligned}
    \label{eq_eqf}
    \hat{Q}_t(\mathbf{u}) 
    &:= 
    \inf
    \left \{
        r
        :
        \frac{1}{{N_i}}
        \sum_{j = 1}^{N_i}
        \mathbf{1}\left\{
            \left | w^{(j)}(t) \right |
            \le
            r
            \right\}
        \ge
        1 - \frac{p}{T}
        \right \} \\
    &=
    \left | w^{(\eta_k)}(t) \right |.
\end{aligned}
\end{equation*}
As the name suggests, the empirical quantile function is a finite sample approximation of the actual quantile function defined in Equation~\ref{eq_qf}. The confidence sets constructed in Section~\ref{sec_cs} is then given by
\begin{equation*}
\begin{aligned}
    &\hat{C}_t(\mathbf{u})
    =
    \left \{
        w(t) : |w(t)| \le \hat{Q}_t(\mathbf{u})
    \right \},
    \\
    &\hat{C}_u
    =
    \hat{C}_0(\mathbf{u})
    \times
    \dots
    \times
    \hat{C}_{T-1}(\mathbf{u}).
\end{aligned} 
\end{equation*}
In this sense, the E-IRPC algorithm is a finite-sample approximation of I-IRPC. This connection is essential to the upcoming analysis.

\section{Main Results}
\label{sec_mr}
In this section, we provide a theoretical analysis of our proposed method. We start with an examination of the I-IRPC, where we give sufficient conditions that guarantees 1) the existence and uniqueness of the performatively stable control, 2) convergence to the performatively stable control and 3) the proximity of the performatively stable control to the performatively optimal control. We then extend these results to the E-IRPC.

\subsection{Preliminaries}
We start by presenting some standard assumptions on the loss function $L(\mathbf{w}, \mathbf{u})$ and an important assumption regarding the noise distribution.
\begin{assumption}[$\lambda$-strong convexity in $\mathbf{u}$]
    \label{asm_sc}
    We assume that the loss function $L(\mathbf{w}, \mathbf{u})$ is $\lambda$-strongly convex
    in $\mathbf{u}$.
\end{assumption}
\begin{assumption}[$\beta$-smoothness in $\mathbf{w}$]
    \label{asm_sm}
    We assume that the loss function $L(\mathbf{w}, \mathbf{u})$ is $\beta$-smooth in $\mathbf{w}$,
\end{assumption}
\begin{assumption}[Lipschitz continuity of quantile functions]
    \label{asm_lip}
    Let $Q_t(\mathbf{u})$ be the $(1 - p/T)$-quantile function for 
    $|w(t)|$ when $\mathbf{u}$ is the control input and $p$
    is the predefined probability level 
    (the definition of
    $Q_t(\mathbf{u})$ is given in Equation~\ref{eq_qf}).
    We assume that $Q_t(\mathbf{u})$ is $\eps_t$-Lipschitz, i.e.
    for all $\mathbf{u}, \mathbf{u}' \in U$,
    $
        |Q_t(\mathbf{u}) - Q_t(\mathbf{u}')|
        \le
        \eps_t
        |\mathbf{u} - \mathbf{u}'|
    $.
\end{assumption}

\subsection{Existence of fixed-points and optima}
Let
$
    g(\mathbf{u}, \mathbf{v})
    =
    \max\{
        L(\mathbf{w}, \mathbf{v})
        :
        \mathbf{w} \in C(\mathbf{u})
    \}
$.
\begin{theorem}[Existence of performatively stable control]
    \label{thm_eps}
    Suppose the following assumptions hold:
    \begin{itemize}
        \item
            $L(\mathbf{w}, \mathbf{u})$ is strictly convex in $\mathbf{u}$,
        \item
            $g(\mathbf{u}, \mathbf{v})$ jointly continuous in $\mathbf{u}$ and $\mathbf{v}$,
        \item
            the quantile functions $Q_t(\mathbf{u})$ are 
            $\eps_t$ in $\mathbf{u}$ for all $t = 0, \dots, T-1$ (Assumption~\ref{asm_lip}),
            and
        \item
            $U$ is compact and convex.
    \end{itemize}
    Then there exists at least one performatively stable control.
\end{theorem}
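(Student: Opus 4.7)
The plan is to apply Brouwer's fixed-point theorem to the map $\cA: U \to U$ from Equation~\ref{eq_iir}. By Definition~\ref{def_ps}, a fixed point of $\cA$ is exactly a performatively stable control, so the task reduces to verifying that $\cA$ is a well-defined, single-valued, continuous self-map of the compact convex set $U$.

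First I would argue that $\cA(\mathbf{u})$ is a singleton for each $\mathbf{u} \in U$. For any fixed $\mathbf{w}$, $L(\mathbf{w}, \cdot)$ is strictly convex in $\mathbf{v}$ by hypothesis, and $C(\mathbf{u})$ is compact (a product of closed balls of finite radii $Q_t(\mathbf{u})$), so the inner maximum in $g(\mathbf{u}, \mathbf{v}) = \max_{\mathbf{w} \in C(\mathbf{u})} L(\mathbf{w}, \mathbf{v})$ is attained. Taking $\mathbf{v}_\lambda = \lambda \mathbf{v}_1 + (1-\lambda)\mathbf{v}_2$ with $\mathbf{v}_1 \neq \mathbf{v}_2$ and $\lambda \in (0,1)$, and picking any maximizer $\mathbf{w}^\star \in C(\mathbf{u})$ at $\mathbf{v}_\lambda$, strict convexity of $L(\mathbf{w}^\star, \cdot)$ gives $g(\mathbf{u}, \mathbf{v}_\lambda) = L(\mathbf{w}^\star, \mathbf{v}_\lambda) < \lambda L(\mathbf{w}^\star, \mathbf{v}_1) + (1-\lambda) L(\mathbf{w}^\star, \mathbf{v}_2) \le \lambda g(\mathbf{u}, \mathbf{v}_1) + (1-\lambda) g(\mathbf{u}, \mathbf{v}_2)$. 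Hence $g(\mathbf{u}, \cdot)$ is strictly convex, and together with its continuity (inherited from joint continuity of $g$) and compactness of $U$, the inner minimization has a unique optimizer.

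Next I would invoke Berge's maximum theorem to conclude that $\cA$ is continuous. The feasibility correspondence $\mathbf{u} \mapsto U$ is trivially continuous and compact-valued, and $g$ is jointly continuous by hypothesis, so Berge's theorem yields upper hemicontinuity of the $\argmin$ correspondence; being singleton-valued by the previous step, it collapses to a continuous function $\cA: U \to U$. Brouwer's fixed-point theorem applied to the continuous self-map $\cA$ on the nonempty compact convex set $U$ then produces $\mathbf{u}_{PS} \in U$ with $\cA(\mathbf{u}_{PS}) = \mathbf{u}_{PS}$, which is the desired performatively stable control.

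The main subtlety is ensuring that strict convexity of each $L(\mathbf{w}, \cdot)$ transfers to the inner-max function $g(\mathbf{u}, \cdot)$; this relies on attainment of the maximum over $C(\mathbf{u})$, so the argument implicitly needs $C(\mathbf{u})$ to be nonempty and compact and $L(\cdot, \mathbf{v})$ to be upper semicontinuous in $\mathbf{w}$, conditions that are compatible with the assumed joint continuity of $g$ and the definition of the quantile radii. Notably, the Lipschitz hypothesis on $Q_t$ is not explicitly used in this existence argument (it is absorbed into the joint continuity of $g$); it is presumably in place for the subsequent quantitative results on convergence and proximity of $\mathbf{u}_{PS}$ to $\mathbf{u}_{PO}$.
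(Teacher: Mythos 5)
Your proposal is correct and follows essentially the same route as the paper: strict convexity of $L(\mathbf{w},\cdot)$ yields strict convexity of the inner max $g(\mathbf{u},\cdot)$ and hence a singleton $\argmin$, Berge's Maximum Theorem gives upper hemicontinuity (hence continuity) of $\cA$, and Brouwer's fixed-point theorem on the compact convex set $U$ concludes. Your additional observations---that the strict-convexity transfer needs attainment of the inner max over a compact $C(\mathbf{u})$, and that the Lipschitz hypothesis on $Q_t$ is not actually used in this particular argument---are accurate fillings-in of details the paper leaves implicit.
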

\begin{proof}
    Since $g(\mathbf{u}, \mathbf{v})$ jointly continuous,
    by the Maximum Theorem (see, e.g. \cite[Chapter~E.3]{ok07}),
    we know that the set valued mapping
    given by 
    $
        \mathbf{u} \mapsto \argmin\{g(\mathbf{u}, \mathbf{v}): \mathbf{v} \in U\}
    $
    is upper hemicontinuous. Further, by strict convexity of $L(\mathbf{w}, \mathbf{u})$
    in $\mathbf{u}$, $g(\mathbf{u}, \mathbf{v})$ is also strictly convex in $\mathbf{u}$. 
    Therefore, the set $\argmin\{g(\mathbf{u}, \mathbf{v}): \mathbf{v} \in U\}$ contains only 
    one element, which is $\cA(\mathbf{u})$.
    This implies that $\cA(\mathbf{u})$ is continuous.
    The existence of a fixed points then follows
    by the Brouwer's Fixed-Point Theorem.
\end{proof}

Uniqueness of the performatively stable control requires a bit more conditions, as we will see shortly. The existence and uniqueness of the performatively optimal control, on the other hand, is much easier to establish.
\begin{theorem}[Existence and Uniqueness of performatively optimal control]
    \label{thm_epo}
    Suppose the following assumptions hold:
    \begin{itemize}
        \item 
            $L(\mathbf{w}, \mathbf{u})$ is jointly continuous in $\mathbf{u}$ and
        \item
            $U$ is compact and convex.
    \end{itemize}
    Then there exists a unique performatively optimal control.
\end{theorem}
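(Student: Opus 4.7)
The plan is to reduce Definition~\ref{def_po} to minimizing the scalar function $h(\mathbf{u}) := \max_{\mathbf{w} \in C(\mathbf{u})} L(\mathbf{w}, \mathbf{u}) = g(\mathbf{u}, \mathbf{u})$ over the compact convex set $U$, then establish existence via the Weierstrass extreme value theorem and uniqueness via strict convexity of $h$.

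First I would verify that the correspondence $\mathbf{u} \mapsto C(\mathbf{u})$ is continuous, meaning both upper and lower hemicontinuous. Since $C_t(\mathbf{u})$ is the closed Euclidean ball of radius $Q_t(\mathbf{u})$ and, by Assumption~\ref{asm_lip}, $Q_t$ is Lipschitz (hence continuous), the closed-ball correspondence is continuous in the Hausdorff sense, and the Cartesian product $C(\mathbf{u}) = \prod_t C_t(\mathbf{u})$ inherits this continuity. Combined with joint continuity of $L$, Berge's Maximum Theorem (already invoked in the proof of Theorem~\ref{thm_eps}) yields joint continuity of $g$, and hence continuity of $h$. Since $U$ is compact, Weierstrass produces a minimizer $\mathbf{u}_{PO} \in U$, establishing existence.

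Uniqueness is the main obstacle. Under only the stated hypotheses, $h$ need not even be convex, because the feasible set $C(\mathbf{u})$ over which the inner maximum is taken varies with $\mathbf{u}$. I would therefore lean on the standing $\lambda$-strong convexity of $L$ in $\mathbf{u}$ (Assumption~\ref{asm_sc}) together with $\beta$-smoothness in $\mathbf{w}$ (Assumption~\ref{asm_sm}). For $\mathbf{u}_1 \neq \mathbf{u}_2$ in $U$ and $\theta \in (0,1)$, set $\mathbf{u}_\theta := \theta \mathbf{u}_1 + (1-\theta)\mathbf{u}_2$ and pick $\mathbf{w}^\star \in \argmax_{\mathbf{w} \in C(\mathbf{u}_\theta)} L(\mathbf{w}, \mathbf{u}_\theta)$. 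Using the Lipschitz bounds $\eps_t$ on $Q_t$, I would construct $\mathbf{w}_1 \in C(\mathbf{u}_1)$ and $\mathbf{w}_2 \in C(\mathbf{u}_2)$ within $O(\|\mathbf{u}_1 - \mathbf{u}_2\|)$ of $\mathbf{w}^\star$ (for example, by radially rescaling each coordinate block), then combine the strong-convexity inequality $L(\mathbf{w}, \mathbf{u}_\theta) \le \theta L(\mathbf{w}, \mathbf{u}_1) + (1-\theta) L(\mathbf{w}, \mathbf{u}_2) - \tfrac{\lambda}{2}\theta(1-\theta)\|\mathbf{u}_1 - \mathbf{u}_2\|^2$ with $\beta$-smoothness to absorb the perturbations $L(\mathbf{w}^\star, \mathbf{u}_i) - L(\mathbf{w}_i, \mathbf{u}_i)$.

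Putting these pieces together gives $h(\mathbf{u}_\theta) \le \theta h(\mathbf{u}_1) + (1-\theta) h(\mathbf{u}_2) - \tfrac{1}{2}\theta(1-\theta)(\lambda - c\beta\max_t \eps_t^2)\|\mathbf{u}_1 - \mathbf{u}_2\|^2$ for some absolute constant $c$. Provided $\lambda$ dominates the $\beta \eps_t^2$ error term, $h$ is strictly convex on the convex set $U$ and the minimizer is unique. The technical crux is thus the radial-rescaling construction of $\mathbf{w}_1, \mathbf{w}_2$ and the bookkeeping that shows the smoothness slack is bounded by a constant times $\|\mathbf{u}_1 - \mathbf{u}_2\|^2$ rather than a lower-order term; I expect this to be the most delicate step of the argument.
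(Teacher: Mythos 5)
Your existence argument is exactly the paper's: continuity of $\mathbf{u} \mapsto C(\mathbf{u})$ from Assumption~\ref{asm_lip}, Berge's Maximum Theorem to get continuity of $h(\mathbf{u}) = g(\mathbf{u},\mathbf{u})$, and Weierstrass on the compact set $U$. That half is fine.

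On uniqueness you have correctly put your finger on a real problem, but be aware that the paper's own proof simply does not address it: the argument in the paper ends at existence, even though the theorem statement claims uniqueness. Your instinct that uniqueness cannot follow from joint continuity and compactness alone is right. However, your proposed repair does not close either. The crux is your claim that the slack $L(\mathbf{w}^\star,\mathbf{u}_i) - L(\mathbf{w}_i,\mathbf{u}_i)$ can be bounded by a constant times $\|\mathbf{u}_1-\mathbf{u}_2\|^2$. The radial rescaling gives $|\mathbf{w}^\star - \mathbf{w}_i| = O(\|\mathbf{u}_\theta - \mathbf{u}_i\|)$, and $\beta$-smoothness then yields
\begin{equation*}
L(\mathbf{w}^\star,\mathbf{u}_i) - L(\mathbf{w}_i,\mathbf{u}_i) \le \nabla_{\mathbf{w}} L(\mathbf{w}_i,\mathbf{u}_i)^\T(\mathbf{w}^\star - \mathbf{w}_i) + \tfrac{\beta}{2}|\mathbf{w}^\star - \mathbf{w}_i|^2 ,
\end{equation*}
whose linear term is first order in $\|\mathbf{u}_1 - \mathbf{u}_2\|$ unless the gradient in $\mathbf{w}$ vanishes at the worst-case noise. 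Carrying the first-order term through, the best inequality you get is of the form $h(\mathbf{u}_\theta) \le \theta h(\mathbf{u}_1) + (1-\theta)h(\mathbf{u}_2) + \theta(1-\theta)\|\mathbf{u}_1-\mathbf{u}_2\|\bigl(2L_w\sqrt{\textstyle\sum_t \eps_t^2} - \tfrac{\lambda}{2}\|\mathbf{u}_1-\mathbf{u}_2\|\bigr)$, which is only strictly subadditive when $\|\mathbf{u}_1-\mathbf{u}_2\|$ exceeds a threshold of order $L_w\sqrt{\sum_t\eps_t^2}/\lambda$. That yields a diameter bound on the set of minimizers (in the spirit of Theorem~\ref{thm_close1}), not uniqueness. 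Genuine uniqueness would require an extra hypothesis, e.g.\ $\eps_t \equiv 0$ (decision-independent confidence sets) or some structural condition forcing the map $\mathbf{u}\mapsto\max_{\mathbf{w}\in C(\mathbf{u})}L(\mathbf{w},\mathbf{u})$ itself to be strictly convex; as stated, the theorem's uniqueness claim is unsupported both in your proposal and in the paper.
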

\begin{proof}
    Assumption~\ref{asm_lip} implies that the 
    set-valued map
    $\mathbf{u} \mapsto C(\mathbf{u})$
    is continuous.
    Define $h(\mathbf{u}) 
        := 
        \max_{\mathbf{w} \in C(\mathbf{u})} L(\mathbf{w}, \mathbf{u})
        =
        g(\mathbf{u}, \mathbf{u})$. 
    Notice the performative optimal control is the minimizer 
    of $h(\mathbf{u})$ over $U$.
    Since $C(\mathbf{u})$ is continuous, 
    by the Maximum Theorem,
    $h(\mathbf{u})$ is continuous in $\mathbf{u}$.
    The existence of the performative optimal control then
    follows from the Weierstrass Extreme Value Theorem.
\end{proof}

\subsection{Convergence}
We are now ready to investigate the convergence properties of the I-IRPC algorithm. Our analysis relies on the following additional assumption.

\begin{definition}
    We say that the worst case noise align
    for the loss function $L$ and the confidence set mapping $C$
    if for any $\mathbf{v}, \mathbf{u}, \mathbf{u}' \in U$, align if there exists
    $\mathbf{w}_1^* \in \argmax_{\mathbf{w} \in C(\mathbf{u})} L(\mathbf{w}, \mathbf{v})$ and
    $\mathbf{w}_2^* \in \argmax_{\mathbf{w} \in C(\mathbf{u}')} L(\mathbf{w}, \mathbf{v})$
    such that $\mathbf{w}_1^*/|\mathbf{w}_1^*| = \mathbf{w}_2^*/|\mathbf{w}_2^*|$.
\end{definition}
\begin{assumption}[Alignment of worst-case noises]
    \label{asm_noise1}
    We assume that the worst-case noises \textit{align}.
\end{assumption}

\begin{theorem}[Convergence of I-IRPC to $\mathbf{u}_{PS}$]
    \label{thm_cvg1}
    Under Assumptions~\ref{asm_sc},~\ref{asm_sm},~\ref{asm_lip}, and~\ref{asm_noise1}, 
    the I-IRPC converges
    to a unique performative 
    stable control $\mathbf{u}_{PS}$ at a linear rate: 
        $|\mathbf{u}_i - \mathbf{u}_{PS}| \le \delta$ 
    for all
    \begin{equation*}
        i 
        \ge
        \left ( 
            1 
            - 
            \frac{
                \beta
                \sqrt{
                    \sum_{t=0}^{T-1}
                    \eps_t^2
                }
            }{\lambda}
        \right )^{-1}
        \log
        \left (
            \frac
            {|\mathbf{u}_0 - \mathbf{u}_{PS}|}
            {\delta}
        \right ).
    \end{equation*}
\end{theorem}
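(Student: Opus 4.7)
\textbf{Proof plan for Theorem~\ref{thm_cvg1}.}
The plan is to show that the update map $\cA$ defined in Equation~\ref{eq_iir} is a strict contraction on $U$ with modulus $\gamma := \beta\sqrt{\sum_{t=0}^{T-1}\eps_t^2}/\lambda$. Assuming $\gamma<1$ (as is implicit for the stated bound to be meaningful), Banach's fixed-point theorem then supplies a unique $\mathbf{u}_{PS}$ together with the geometric estimate $|\mathbf{u}_i - \mathbf{u}_{PS}| \le \gamma^i |\mathbf{u}_0 - \mathbf{u}_{PS}|$. Solving $\gamma^i |\mathbf{u}_0 - \mathbf{u}_{PS}| \le \delta$ for $i$ and invoking the elementary inequality $-\log\gamma \ge 1-\gamma$ on $(0,1)$ yields the iteration count in the theorem.

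For the contraction estimate, fix $\mathbf{u},\mathbf{u}'\in U$ and let $\mathbf{u}_1 := \cA(\mathbf{u})$, $\mathbf{u}_2 := \cA(\mathbf{u}')$. By Assumption~\ref{asm_sc} each $L(\mathbf{w},\cdot)$ is $\lambda$-strongly convex, and since a pointwise maximum of $\lambda$-strongly convex functions inherits the same modulus, both $g(\mathbf{u},\cdot)$ and $g(\mathbf{u}',\cdot)$ are $\lambda$-strongly convex on $U$. Writing the strong-convexity quadratic lower bound at the two constrained minimizers $\mathbf{u}_1$ and $\mathbf{u}_2$ (using the variational-inequality form of the first-order conditions to kill the linear term) and adding the two resulting inequalities produces the standard two-inequality estimate
\begin{equation*}
\lambda|\mathbf{u}_1 - \mathbf{u}_2|^2 \;\le\; F(\mathbf{u}_2) - F(\mathbf{u}_1), \qquad F(\mathbf{v}) := g(\mathbf{u},\mathbf{v}) - g(\mathbf{u}',\mathbf{v}).
\end{equation*}
The remaining task is to establish $|F(\mathbf{u}_2) - F(\mathbf{u}_1)| \le \beta\sqrt{\sum_t\eps_t^2}\,|\mathbf{u}-\mathbf{u}'|\cdot|\mathbf{u}_1-\mathbf{u}_2|$, after which dividing by $|\mathbf{u}_1-\mathbf{u}_2|$ delivers $|\cA(\mathbf{u}) - \cA(\mathbf{u}')| \le \gamma|\mathbf{u}-\mathbf{u}'|$.

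For that bound, fix $\mathbf{v}\in U$ and let $\mathbf{w}^*(\mathbf{u},\mathbf{v}) \in \argmax_{\mathbf{w}\in C(\mathbf{u})} L(\mathbf{w},\mathbf{v})$. Danskin's envelope theorem gives $\nabla_\mathbf{v} g(\mathbf{u},\mathbf{v}) = \nabla_\mathbf{v} L(\mathbf{w}^*(\mathbf{u},\mathbf{v}),\mathbf{v})$, and the $\beta$-smoothness hypothesis in $\mathbf{w}$ then implies
\begin{equation*}
|\nabla_\mathbf{v} F(\mathbf{v})| \;\le\; \beta\,|\mathbf{w}^*(\mathbf{u},\mathbf{v}) - \mathbf{w}^*(\mathbf{u}',\mathbf{v})|.
\end{equation*}
Assumption~\ref{asm_noise1} is exactly what lets us choose the two maximizers so that corresponding components are positive scalar multiples of one another; combined with the product structure $C(\mathbf{u}) = C_0(\mathbf{u})\times\cdots\times C_{T-1}(\mathbf{u})$ and Assumption~\ref{asm_lip}, this yields $|w^*(\mathbf{u},\mathbf{v})(t) - w^*(\mathbf{u}',\mathbf{v})(t)| \le |Q_t(\mathbf{u}) - Q_t(\mathbf{u}')| \le \eps_t|\mathbf{u}-\mathbf{u}'|$ at each $t$. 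Summing in quadrature bounds $|\mathbf{w}^*(\mathbf{u},\mathbf{v})-\mathbf{w}^*(\mathbf{u}',\mathbf{v})|$ by $\sqrt{\sum_t\eps_t^2}\,|\mathbf{u}-\mathbf{u}'|$ uniformly in $\mathbf{v}$, and integrating $\nabla_\mathbf{v} F$ along the segment joining $\mathbf{u}_1$ and $\mathbf{u}_2$ in $U$ delivers the required bound on $|F(\mathbf{u}_2) - F(\mathbf{u}_1)|$.

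The hard part is the envelope/alignment step. Danskin's theorem requires either a unique or continuously selectable inner maximizer, which a general compact constraint does not automatically provide, and one must also relate the variation of the maximizer as $\mathbf{u}$ changes to the Lipschitz quantile bound coordinatewise. Assumption~\ref{asm_noise1} is the structural hypothesis that makes a coordinated selection of maximizers across different confidence sets possible; without it the pointwise factor $\sqrt{\sum_t\eps_t^2}$ would either pick up an extra constant or fail entirely. Once the per-$\mathbf{v}$ gradient bound is secured, what remains is the routine contraction-plus-Banach calculation familiar from the performative prediction literature.
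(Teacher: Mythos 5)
Your proposal is correct and follows essentially the same route as the paper: a two-point strong-convexity inequality for the inner minimization, identification of the relevant (sub)gradients via the envelope theorem, the $\beta$-smoothness plus alignment plus Lipschitz-quantile chain to get the contraction modulus $\beta\sqrt{\sum_t \eps_t^2}/\lambda$, and Banach's fixed-point theorem. The only cosmetic difference is that you bound the difference function $F(\mathbf{v}) = g(\mathbf{u},\mathbf{v})-g(\mathbf{u}',\mathbf{v})$ by integrating its gradient along a segment, whereas the paper evaluates the two subgradients at the single point $\cA(\mathbf{u}')$; both rest on the same (and equally delicate) step of relating the two worst-case noises to the quantile gap via Assumption~\ref{asm_noise1}, which you correctly flag as the crux.
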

\begin{proof}
    Let $\mathbf{u}, \mathbf{u}' \in U$ be two different control inputs.
    Since $L(\mathbf{w}, \mathbf{u})$ is $\lambda$-strongly convex in $\mathbf{u}$,
    for any $\mathbf{u} \in U$, $g(\mathbf{u}, \mathbf{v})$ 
    is $\lambda$-strongly convex in $\mathbf{u}$.
    Let $\partial_\mathbf{u} g(\mathbf{u}, \mathbf{v})$ denote the subgradient of $g(\mathbf{u}, \mathbf{v})$
    with respect to $\mathbf{u}$.
    Take any
    $s_\mathbf{u} \in \partial_\mathbf{{}\cA(\mathbf{u})} g(\mathbf{u}, \cA(\mathbf{u}))$
    and     
    $s_{\mathbf{u}'} \in \partial_\mathbf{{}\cA(\mathbf{u}')} g(\mathbf{u}, \cA(\mathbf{u}'))$,
    by strong convexity and definition of subgradients, 
    we can derive
        $\lambda
        |\cA(\mathbf{u}) - \cA(\mathbf{u}')|^2
        \le
        (\cA(\mathbf{u}') - \cA(\mathbf{u}))^\T 
        (s_{\mathbf{u}'} - s'_{\mathbf{u}'})$.
    Notice again that this hold for any 
    $s_{\mathbf{u}'} \in \partial_\mathbf{{}\cA(\mathbf{u}')} g(\mathbf{u}, \cA(\mathbf{u}'))$ and any
    $s'_{\mathbf{u}'} \in \partial_\mathbf{{}\cA(\mathbf{u}')} g(\mathbf{u}', \cA(\mathbf{u}'))$.
    Now, take $\mathbf{w}_1^* \in \argmax_{\mathbf{w} \in C(\mathbf{u})} L(\mathbf{w}, \cA(\mathbf{u}'))$ and $\mathbf{w}_2^* \in \argmax_{\mathbf{w} \in C(\mathbf{u}')} L(\mathbf{w}, \cA(\mathbf{u}'))$.
    It's easy to show that
        $\nabla_{\cA(\mathbf{u}')}
        L(\mathbf{w}_1^*, \cA(\mathbf{u}'))
        \in 
    \partial_\mathbf{{}\cA(\mathbf{u}')} g(\mathbf{u}, \cA(\mathbf{u}'))$ and 
        $\nabla_{\cA(\mathbf{u}')}
        L(\mathbf{w}_2^*, \cA(\mathbf{u}'))
        \in 
    \partial_\mathbf{{}\cA(\mathbf{u}')} g(\mathbf{u}', \cA(\mathbf{u}'))$.
    Combining these, we have
    \begin{equation*}
    \begin{aligned}
        &\qquad|\cA(\mathbf{u}) - \cA(\mathbf{u}')|\\
        &\le
        \frac{1}{\lambda}
        \left |
        \nabla_{\cA(\mathbf{u}')}
        L(\mathbf{w}_1^*, \cA(\mathbf{u}'))
        -
        \nabla_{\cA(\mathbf{u}')}
        L(\mathbf{w}_2^*, \cA(\mathbf{u}'))
        \right |
        \\
        &\stackrel{(i)}{\le}
        \frac{\beta}{\lambda}
        |\mathbf{w}_1^* - \mathbf{w}_2^*|
        \\
        &\stackrel{(ii)}{=}
        \frac{\beta}{\lambda}
        \sqrt{
            \sum_{t=0}^{T-1}
            |Q_t(\mathbf{u}) - Q_t(\mathbf{u}')|^2
        }
        \\
        &\stackrel{(iii)}{\le}
        \frac{
            \beta
            \sqrt{
                \sum_{t=0}^{T-1}
                \eps_t^2
            }
        }{\lambda}
        |\mathbf{u} - \mathbf{u}'|.
    \end{aligned}
    \end{equation*}
    where step (i) follows from 
    $\beta$-smoothness (Assumption~\ref{asm_sm}),
    step (ii) follows from 
    Assumption~\ref{asm_noise1}
    and step (iii) follows from
    $\eps_t$-Lipschitz continuity (Assumption~\ref{asm_lip}).
    Thus, when
    $\frac{
            \beta
            \sqrt{
                \sum_{t=0}^{T-1}
                \eps_t^2
            }
        }{\lambda}
        <
        1$, 
    the I-IRPC converges as stated by the
    Contraction Mapping Theorem 
    (see, for example, \cite[Appendix~B]{khalil02}).
\end{proof}

\subsection{Relating Performatively Stable Control to Performatively Optimal Control}
So far, we have identified the conditions that allow the I-IRPC to converges to a unique performatively stable control, $\mathbf{u}_{PS}$. However, what we really care about is the performatively optimal control, $\mathbf{u}_{PO}$, which is the solution to Problem~\ref{eq_p1}. In this section, we will establish the sufficient condition that makes $\mathbf{u}_{PO}$ close to $\mathbf{u}_{PS}$. To do so, we first need to make the following additional assumption.
\begin{assumption}[$L_w$-Lipschitz continuity in $\mathbf{w}$]
    \label{asm_lip2} 
    We assume that $L(\mathbf{w}, \mathbf{u})$ is $L_w$-Lipschitz in $\mathbf{w}$, i.e.
    for any $\mathbf{w}, \mathbf{w}' \in \bbR^{nT}$ and $\mathbf{u} \in U$,
$|L(\mathbf{w}, \mathbf{u}) - L(\mathbf{w}', \mathbf{u})|
        \le
        L_w |\mathbf{w} - \mathbf{w}'|$. 
\end{assumption}
\begin{theorem}[Relating $\mathbf{u}_{PS}$ and $\mathbf{u}_{PO}$]
    \label{thm_close1}
    In addition to assumptions made in Theorem~\ref{thm_cvg1},
    if Assumption~\ref{asm_lip2} is also satisfied, then 
    \[
    |\mathbf{u}_{PS} - \mathbf{u}_{PO}|
        \le
        \frac{
            2 L_w
            \sqrt{
                \sum_{t=0}^{T-1}
                \eps_t^2
            }
        }{\lambda}.
        \]
\end{theorem}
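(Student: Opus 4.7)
The plan is to use $\lambda$-strong convexity of $g(\mathbf{u}_{PS},\cdot)$ at its minimizer to convert the squared distance $|\mathbf{u}_{PS}-\mathbf{u}_{PO}|^2$ into a function-value gap, and then dominate that gap by reusing exactly the Lipschitz/alignment machinery from the proof of Theorem~\ref{thm_cvg1}. First, observe that $g(\mathbf{u}_{PS},\mathbf{v})=\max_{\mathbf{w}\in C(\mathbf{u}_{PS})}L(\mathbf{w},\mathbf{v})$ is $\lambda$-strongly convex in $\mathbf{v}$, since it is a pointwise supremum of $\lambda$-strongly-convex functions (Assumption~\ref{asm_sc}). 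Because $\mathbf{u}_{PS}=\cA(\mathbf{u}_{PS})$ minimizes $g(\mathbf{u}_{PS},\cdot)$ over $U$, strong convexity yields
\[
\frac{\lambda}{2}|\mathbf{u}_{PS}-\mathbf{u}_{PO}|^2
\le
g(\mathbf{u}_{PS},\mathbf{u}_{PO})-g(\mathbf{u}_{PS},\mathbf{u}_{PS}).
\]
Next, since $\mathbf{u}_{PO}$ minimizes $\mathbf{u}\mapsto g(\mathbf{u},\mathbf{u})$ by Definition~\ref{def_po}, we have $g(\mathbf{u}_{PO},\mathbf{u}_{PO})\le g(\mathbf{u}_{PS},\mathbf{u}_{PS})$, which upgrades the right-hand side to $g(\mathbf{u}_{PS},\mathbf{u}_{PO})-g(\mathbf{u}_{PO},\mathbf{u}_{PO})$.

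Now I express each of these two $g$-values as $L$ evaluated at a worst-case noise: pick $\mathbf{w}_1^*\in\argmax_{\mathbf{w}\in C(\mathbf{u}_{PS})}L(\mathbf{w},\mathbf{u}_{PO})$ and $\mathbf{w}_2^*\in\argmax_{\mathbf{w}\in C(\mathbf{u}_{PO})}L(\mathbf{w},\mathbf{u}_{PO})$ chosen to satisfy the alignment property (Assumption~\ref{asm_noise1}). Then $g(\mathbf{u}_{PS},\mathbf{u}_{PO})-g(\mathbf{u}_{PO},\mathbf{u}_{PO})=L(\mathbf{w}_1^*,\mathbf{u}_{PO})-L(\mathbf{w}_2^*,\mathbf{u}_{PO})\le L_w|\mathbf{w}_1^*-\mathbf{w}_2^*|$ by Assumption~\ref{asm_lip2}, and the same computation that appears in steps (ii)--(iii) of the proof of Theorem~\ref{thm_cvg1} gives
\[
|\mathbf{w}_1^*-\mathbf{w}_2^*|
=\sqrt{\sum_{t=0}^{T-1}|Q_t(\mathbf{u}_{PS})-Q_t(\mathbf{u}_{PO})|^2}
\le \sqrt{\sum_{t=0}^{T-1}\eps_t^2}\;|\mathbf{u}_{PS}-\mathbf{u}_{PO}|.
\]
Chaining the three inequalities produces $\frac{\lambda}{2}|\mathbf{u}_{PS}-\mathbf{u}_{PO}|^2\le L_w\sqrt{\sum_t \eps_t^2}\,|\mathbf{u}_{PS}-\mathbf{u}_{PO}|$, and dividing through by $|\mathbf{u}_{PS}-\mathbf{u}_{PO}|$ (the bound being trivial when this quantity is zero) delivers the claim, with the factor $2$ coming from the cancellation of the $\frac{\lambda}{2}$ in the strong-convexity inequality.

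The main obstacle is the middle step: the two noises $\mathbf{w}_1^*$ and $\mathbf{w}_2^*$ maximize $L(\cdot,\mathbf{u}_{PO})$ over \emph{different} confidence sets $C(\mathbf{u}_{PS})$ and $C(\mathbf{u}_{PO})$, so \emph{a priori} there is no reason for $|\mathbf{w}_1^*-\mathbf{w}_2^*|$ to be controlled by $|\mathbf{u}_{PS}-\mathbf{u}_{PO}|$. This is exactly what the alignment hypothesis (Assumption~\ref{asm_noise1}) combined with $\eps_t$-Lipschitz quantiles (Assumption~\ref{asm_lip}) fixes, reducing the noise distance to a tidy function of the quantile differences. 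Everything else in the argument is essentially bookkeeping on top of this reduction.
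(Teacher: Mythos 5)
Your proposal is correct and follows essentially the same route as the paper's own proof: the three-term chain $g(\mathbf{u}_{PO},\mathbf{u}_{PO})\le g(\mathbf{u}_{PS},\mathbf{u}_{PS})\le g(\mathbf{u}_{PS},\mathbf{u}_{PO})$, the $\lambda$-strong-convexity lower bound at the minimizer $\mathbf{u}_{PS}$ of $g(\mathbf{u}_{PS},\cdot)$, the $L_w$-Lipschitz upper bound via the two worst-case noises, and the alignment-plus-quantile-Lipschitz reduction of $|\mathbf{w}_1^*-\mathbf{w}_2^*|$ to $\sqrt{\sum_t\eps_t^2}\,|\mathbf{u}_{PS}-\mathbf{u}_{PO}|$. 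You in fact spell out the last reduction more explicitly than the paper does (which simply cites Assumption~\ref{asm_noise1}), but the argument is the same.
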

\begin{proof}
    By the definition of $\mathbf{u}_{PO}$ and $\mathbf{u}_{PS}$, we have $g(\mathbf{u}_{PO}, \mathbf{u}_{PO})
        \le
        g(\mathbf{u}_{PS}, \mathbf{u}_{PS})
        \le
        g(\mathbf{u}_{PS}, \mathbf{u}_{PO})$. 
    Since $\mathbf{u}_{PS}$ is a maximizer of $g(\mathbf{u}_{PS}, \mathbf{v})$, we know
    that $0 \in \partial_\mathbf{v} g(\mathbf{u}_{PS}, \mathbf{v})$. Therefore, by
    $\lambda$-strong convexity (Assumption~\ref{asm_sc}), we have
        $g(\mathbf{u}_{PS}, \mathbf{u}_{PO})
        -
        g(\mathbf{u}_{PS}, \mathbf{u}_{PS})
        \ge
        \frac{\lambda}{2}
        |\mathbf{u}_{PS} - \mathbf{u}_{PO}|^2$. 
    Furthermore,
    by Assumption~\ref{asm_lip2}, we also have
        $g(\mathbf{u}_{PS}, \mathbf{u}_{PO})
        -
        g(\mathbf{u}_{PO}, \mathbf{u}_{PO})
        \le
        L_w
        |\mathbf{w}_1^* - \mathbf{w}_2^*|$.
    Combining above equations and applying Assumption~\ref{asm_noise1},
    we get
    \begin{equation*}
        |\mathbf{u}_{PS} - \mathbf{u}_{PO}|^2
        \le
        \left (
            \frac{2 L_w}{\lambda}
            \sqrt{
                \sum_{t=0}^{T-1}
                \eps_t^2
            }
        \right )
        |\mathbf{u}_{PS} - \mathbf{u}_{PO}|.
    \end{equation*}
    Simplifying this completes the proof.
\end{proof}

\subsection{Finite Sample Results}
So far, we have focused our analysis on the I-IRPC, which is the ideal version of our proposed method. Now we are finally ready to bridge the gap between them, and establish the convergence properties of our E-IRPC algorithm. The key here is to view the radius of confidence sets obtained in Section~\ref{sec_cs} as the empirical approximation of the quantile function defined by Equation~\ref{eq_qf}.

\begin{definition}[Hazard rate]
    \label{def_hr}
    Let $X$ be a $1$-dimensional random variable.
    The \textit{hazard rate} of $X$ evaluated at point $x$ is defined by
    \[
        h(x)
        =
        \lim_{\theta \to 0}
        \frac{
            \bbP(
                x \le X \le x + \theta
                |
                X \ge x
            )
        }{
            \theta
        }
        =
        \frac{
            f(x)
        }{
            1 - F(x)
        },
    \]
    where $f$ is the p.d.f (assuming that it exists)
    and $F$ is the c.d.f.
\end{definition}

\begin{assumption}[Additional assumptions on the distribution of noise]
    \label{asm_d}
    We assume that
    for any $\mathbf{u} \in U$,
    we have
    \begin{itemize}
        \item the \textit{hazard rate} of $|w|$ is positive and non-decreasing
        \item the p.d.f of $|w|$ exists and is continuously differentiable.
    \end{itemize}
\end{assumption}
\noindent This assumption is the combination of 
\cite[Assumption~1 and 2]{zhang21}.

\begin{theorem}[Convergence of E-IRPC to a neighborhood of $\mathbf{u}_{PS}$]
    \label{thm_cvg_fs1}
    Under Assumptions~\ref{asm_sc},~\ref{asm_sm},~\ref{asm_lip},~\ref{asm_noise1}, and~\ref{asm_d}, the E-IRPC converges
    to the neighborhood of a unique performative 
    stable control $\mathbf{u}_{PS}$ at a linear rate. Specifically,
    for any $\delta \in (0, 1)$, if we take 
    \[
        {N_i}
        = 
        \cO
        \left(
            \frac{4\lambda^2 T^3}{\beta^2 \delta^2 \sum_{t=0}^{T-1} \eps_t^2}
            \log
            \left(
                \frac{6p}{\pi^2 i^2 T^2}
            \right)
        \right)
    \]
    samples in each iteration, then with probability $1-p$, 
 $|\mathbf{u}_i - \mathbf{u}_{PS}| \le \delta$
    for all  
        $i 
        \ge
        \left ( 
            1 
            - 
            2\alpha_1
        \right )^{-1}
        \log
        \left (
            \frac
            {|\mathbf{u}_0 - \mathbf{u}_{PS}|}
            {\delta}
        \right )$
    with $\alpha_1
        :=
        \frac{\beta \sqrt{\sum_{t=0}^{T-1} \eps_t^2}}{\lambda}$.
\end{theorem}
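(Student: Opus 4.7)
The plan is to treat E-IRPC as a finite-sample perturbation of I-IRPC and run a perturbed-contraction argument on top of Theorem~\ref{thm_cvg1}. The perturbation lives entirely in the quantile estimates: $\hat{Q}_t(\mathbf{u}_i)$ replaces $Q_t(\mathbf{u}_i)$ in the definition of $\hat{\cA}$, and with enough samples this replacement error is small enough that the contraction of $\cA$ carries over with a slightly degraded rate.

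First I would establish a uniform-in-$(i,t)$ concentration bound for the empirical quantile. Assumption~\ref{asm_d} is precisely the hypothesis of the quantile-concentration result of \cite{zhang21}, which for fixed $\mathbf{u}$ and $t$ yields $|\hat{Q}_t(\mathbf{u}) - Q_t(\mathbf{u})| = \cO\!\left(\sqrt{\log(1/q)/N_i}\right)$ with probability $1-q$, the hidden constant depending on the density and hazard rate at the $(1-p/T)$-quantile point. To make the bound hold simultaneously over all $i \ge 1$ and all $t$, I would allocate failure budget $\cO(p/(i^2 T^2))$ to each $(i,t)$ pair and union-bound, using $\sum_i 1/i^2 = \pi^2/6$ to make the total failure at most $p$. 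Inverting the per-pair bound so that $\tfrac{\beta}{\lambda}\sqrt{\sum_t |\hat{Q}_t(\mathbf{u}_i) - Q_t(\mathbf{u}_i)|^2} \le \alpha_1 \delta$ recovers the formula for $N_i$ in the statement.

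Next I would push the quantile error through to an operator-level bound. Repeating the chain of inequalities in the proof of Theorem~\ref{thm_cvg1} but comparing $\hat{\cA}(\mathbf{u})$ against $\cA(\mathbf{u})$ (rather than $\cA(\mathbf{u})$ against $\cA(\mathbf{u}')$), strong convexity in $\mathbf{u}$, $\beta$-smoothness in $\mathbf{w}$, and the alignment of worst-case noises combine to give
\[
|\hat{\cA}(\mathbf{u}) - \cA(\mathbf{u})| \le \frac{\beta}{\lambda}\sqrt{\sum_{t=0}^{T-1} |\hat{Q}_t(\mathbf{u}) - Q_t(\mathbf{u})|^2}.
\]
Alignment transfers to the pair $(\hat{C}(\mathbf{u}), C(\mathbf{u}))$ without extra work, since both are origin-centered balls so the maximizers of the same convex loss are radially parallel. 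On the good event this quantity is at most $\alpha_1 \delta$. The triangle inequality together with $\cA(\mathbf{u}_{PS}) = \mathbf{u}_{PS}$ and the $\alpha_1$-contraction of $\cA$ from Theorem~\ref{thm_cvg1} then give
\[
|\mathbf{u}_{i+1} - \mathbf{u}_{PS}| \le |\hat{\cA}(\mathbf{u}_i) - \cA(\mathbf{u}_i)| + |\cA(\mathbf{u}_i) - \cA(\mathbf{u}_{PS})| \le \alpha_1 \delta + \alpha_1 |\mathbf{u}_i - \mathbf{u}_{PS}|.
\]
While $|\mathbf{u}_i - \mathbf{u}_{PS}| \ge \delta$ the right-hand side is at most $2\alpha_1 |\mathbf{u}_i - \mathbf{u}_{PS}|$, so E-IRPC contracts at effective rate $2\alpha_1$; once inside the $\delta$-ball, $\alpha_1 \delta + \alpha_1 \delta = 2\alpha_1 \delta < \delta$ (using $2\alpha_1 < 1$) keeps the iterates there. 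Solving $(2\alpha_1)^i |\mathbf{u}_0 - \mathbf{u}_{PS}| \le \delta$ and using $\log(1/(2\alpha_1)) \ge 1 - 2\alpha_1$ produces the advertised iteration count.

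The main obstacle is the concentration step: invoking \cite{zhang21} carefully enough that the hazard-rate- and density-dependent constants line up with the stated $N_i$, especially since the target quantile level $1 - p/T$ varies with $T$ and shifts the point at which these distributional quantities are evaluated. A secondary subtlety is verifying that the noise-alignment assumption continues to apply when one of the two confidence sets is empirical rather than ideal; this is immediate in our setting because the empirical and ideal sets are both origin-centered balls and differ only in radius.
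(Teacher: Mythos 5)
Your proposal is correct and follows essentially the same route as the paper's proof: the quantile concentration bound of \cite{zhang21} under Assumption~\ref{asm_d}, a union bound over $(i,t)$ with failure budget proportional to $1/i^2$ summing to $p$ via $\sum_i 1/i^2 = \pi^2/6$, the operator-level perturbation bound $|\hat{\cA}(\mathbf{u}) - \cA(\mathbf{u})| \le \frac{\beta}{\lambda}\sqrt{\sum_t |\hat{Q}_t(\mathbf{u}) - Q_t(\mathbf{u})|^2} \le \alpha_1\delta$ obtained by rerunning the Theorem~\ref{thm_cvg1} argument, and the resulting effective contraction rate $2\alpha_1$ outside the $\delta$-ball. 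If anything, you make explicit two steps the paper leaves implicit --- the triangle-inequality decomposition through $\cA(\mathbf{u}_{PS}) = \mathbf{u}_{PS}$ and the justification that the alignment assumption applies to the empirical-versus-ideal pair of origin-centered balls --- so no changes are needed.
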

\begin{proof}
    For each timestep $t$, take the ${N_i}$ samples of $w(t)$ and
    sort their Euclidean norm in the increasing order as follows
        $\left |w^{(\eta_1)}(t) \right | 
        \ge 
        \dots
        \ge
        \left |w^{(\eta_{N_i})}(t) \right |$. 
    Let 
    $
        k 
        = 
        \left \lfloor
            \frac{Np}{T}
        \right \rfloor,
    $
    then the empirical quantile function is given by
    \begin{equation*}
    \begin{aligned}
        \hat{Q}_t(\mathbf{u}) 
        &:= 
        \inf
        \left \{
            r
            :
            \frac{1}{{N_i}}
            \sum_{j = 1}^{N_i}
            \mathbf{1}\left\{
                \left | w^{j}(t) \right |
                \le
                r
            \right\}
            \ge
            1 - \frac{p}{T}
        \right \} \\
        &=
        \left | w^{(\eta_k)}(t) \right |.
    \end{aligned}
    \end{equation*}
    Then for any $\mathbf{u}, \mathbf{u}' \in U$, we have
        $\left |
            \hat{Q}_t(\mathbf{u})
            -
            Q_t(\mathbf{u}')
        \right |
        \le
        \left |
            \hat{Q}_t(\mathbf{u})
            -
            Q_t(\mathbf{u})
        \right |
        +
        \left |
            Q_t(\mathbf{u})
            -
            Q_t(\mathbf{u}')
        \right |$.
    By \cite[Theorem~2]{zhang21}, 
    for any $\eps > 0$, we have
    \begin{equation*}
    \begin{aligned}
        &\qquad \bbP 
        \left(
            \left |
                \hat{Q}_t(\mathbf{u})
                -
                Q_t(\mathbf{u})
            \right |
            \ge
            \eps
        \right) \\
        &\le
        \exp
        \left(
            -\frac
            {\eps^2}
            {
                2 
                \left( 
                    v^r
                    +
                    \left(
                        c^r + \omega_n
                    \right)
                    \eps
                \right)
            }
        \right)
        +
        \exp
        \left(
            -\frac
            {\eps^2}
            {
                2 
                \left( 
                    v^l
                    +
                    \omega_n \eps
                \right)
            }
        \right)
    \end{aligned}
    \end{equation*}
    where 
    $v^r = \frac{2}{kL^2}$,
    $v^l = \frac{2({N_i}-k+1)}{(k-1)^2L^2}$,
    $c^r = \frac{2}{kL}$ and 
    $\omega_n = \frac{b}{{N_i}}$, 
    with $L$ and $b$ finite for all $\mathbf{u} \in U$.
    With some algebraic manipulations, we get that with 
    $
        {N_i}
        = 
        \cO
        \left(
            \frac{4\lambda^2 T^3}{\beta^2 \delta^2 \sum_{t=0}^{T-1} \eps_t^2}
            \log
            \left(
                \frac{6p}{\pi^2 i^2 T^2}
            \right)
        \right)
    $,
    \begin{equation*}
        \bbP 
        \left(
            \left |
                \hat{Q}_t(\mathbf{u}_i)
                -
                Q_t(\mathbf{u}_i)
            \right |
            \ge
            \frac{\beta \sqrt{\sum_{t=0}^{T-1} \eps_t^2}}{\lambda T}
            \delta
        \right)
        \le
        \frac{6p}{\pi^2 i^2 T}.
    \end{equation*}
    Following the same arguments as in the proof of 
    Theorem~\ref{thm_cvg1},
    with probability at least 
    $1 - \frac{6p}{\pi^2 i^2}$, we have
    \begin{equation*}
        |\hat{\cA}(\mathbf{u}_i) - \cA(\mathbf{u}_i)|
        \le
        \frac{\beta}{\lambda}
        \sqrt{
            \sum_{t=0}^{T-1}
            \left|\hat{Q}_t(\mathbf{u}) - Q_t(\mathbf{u}) \right|^2
        }
        \le
        \alpha_1
        \delta.
    \end{equation*}
    Now, applying Theorem~\ref{thm_cvg1} we get that
    when $|\mathbf{u}_i - \mathbf{u}_{PS}| \ge \delta$,
    $
        |\mathbf{u}_{i+1} - \mathbf{u}_{PS}|
        \le
        2 \alpha_1
        |\mathbf{u}_i - \mathbf{u}_{PS}|.
    $
    Therefore, when
    $
        \alpha_1
        < \frac{1}{2}
    $,
    applying the union bound, with probability at least 
    $1 - \sum_{i=0}^{\infty} \frac{6p}{\pi^2 i^2} = 1 - p$,
    for all $i = 0, 1, ...$
    we have
    $|\mathbf{u}_i - \mathbf{u}_{PS}|
        \le
        \max
        \left\{
            \left(
                2\alpha_1
            \right)^i
            |\mathbf{u}_0 - \mathbf{u}_{PS}|
            ,
            \delta
        \right\}$.
    This completes the proof.
\end{proof}

To summarize our analysis, Theorem~\ref{thm_eps} establishes the existance of a performative stable control, which means the I-IRPC algorithm has a fixed point. Theorem~\ref{thm_epo} establishes the existance of a performative optimal control, in which case the problem we investigate is meaningful. Theorems~\ref{thm_cvg1} gives a set of conditions for the I-IRPC algorithm to converge to a performative control, and Theorems~\ref{thm_close1} examines when the corresponding performative control is close to optimal. Finally, Theorems~\ref{thm_cvg_fs1} extends the convergence results in Theorems~\ref{thm_cvg1} to our main method E-IRPC.

\section{Conclusion}
In summary, we present a novel way to approach control tasks with noisy dynamical systems, requiring no prior knowledge of the noise. In particular, we consider settings where the noise depends on our control in unknown ways, which requires us to re-estimate the uncertainty whenever a different control is used. We derive convergence bounds under certain regularity conditions. Note that although we mainly focus on open-loop control in this paper, we believe it is possible to extend our method to closed-loop control, and this will be our primary direction for future research.

\printbibliography

@book{khalil02,
    title={{N}onlinear {S}ystems},
    author={Khalil, H.K.},
    year={2002},
    publisher={Prentice Hall}
}

@book{ok07,
    title={{R}eal {A}nalysis with {E}conomic {A}pplications},
    author={Ok, E.A.},
    year={2007},
    publisher={Princeton University Press}
}

@book{vovk22,
    title = {{A}lgorithmic {L}earning in a {R}andom {W}orld},
    author = {Vovk, Vladimir and Gammerman, Alexander and Shafer, Glenn},
    edition = {2nd ed.},
    keywords = {Algorithms},
    publisher = {Springer International Publishing},
    year = {2022},
}

@article{lindemann22,
  title={Conformal Prediction for STL Runtime Verification},
  author={Lars Lindemann and Xin Qin and Jyotirmoy V. Deshmukh and George J. Pappas},
  journal={2023 59th Annual Allerton Conference on Communication, Control, and Computing (Allerton)},
  year={2022},
  pages={1-1},
}

@article{bansal20,
  title={{D}eep{R}each: {A} {D}eep {L}earning {A}pproach to {H}igh-{D}imensional {R}eachability},
  author={Somil Bansal and Claire J. Tomlin},
  journal={2021 IEEE International Conference on Robotics and Automation (ICRA)},
  year={2020},
  pages={1817-1824},
}

@article{prajna07,
    author={Prajna, Stephen and Jadbabaie, Ali and Pappas, George J.},
    journal={IEEE Transactions on Automatic Control}, 
    title={{A} {F}ramework for {W}orst-{C}ase and {S}tochastic {S}afety {V}erification {U}sing {B}arrier {C}ertificates}, 
    year={2007},
    volume={52},
    number={8},
    pages={1415-1428},
}

@article{akametalu14,
    title={{R}eachability-{B}ased {S}afe {L}earning with {G}aussian {P}rocesses},
    author={Anayo K. Akametalu and Shahab Kaynama and Jaime Fern{\'a}ndez Fisac and Melanie Nicole Zeilinger and Jeremy H. Gillula and Claire J. Tomlin},
    journal={53rd IEEE Conference on Decision and Control},
    year={2014},
    pages={1424-1431},
}

@article{lindemann23,
    title={{S}afe {P}lanning in {D}ynamic {E}nvironments using {C}onformal {P}rediction},
    author={Lindemann, Lars and Cleaveland, Matthew and Shim, Gihyun and Pappas, George J},
    journal={IEEE Robotics and Automation Letters},
    year={2023},
    publisher={IEEE}
}

@inproceedings{perdomo20,
    title={{P}erformative {P}rediction},
    author={Perdomo, Juan and Zrnic, Tijana and Mendler-D{\"u}nner, Celestine and Hardt, Moritz},
    booktitle={International Conference on Machine Learning},
    pages={7599--7609},
    year={2020},
    organization={PMLR}
}

@inproceedings{zhang21,
    title={{Q}uantile {B}andits for {B}est {A}rms {I}dentification},
    author={Zhang, Mengyan and Ong, Cheng Soon},
    booktitle={International Conference on Machine Learning},
    pages={12513--12523},
    year={2021},
    organization={PMLR}
}

@misc{angelopoulos22,
    title={{A} {G}entle {I}ntroduction to {C}onformal {P}rediction and {D}istribution-{F}ree {U}ncertainty {Q}uantification}, 
    author={Anastasios N. Angelopoulos and Stephen Bates},
    year={2022},
    eprint={2107.07511},
    archivePrefix={arXiv},
    primaryClass={cs.LG}
}

@misc{muthali23,
    title={{M}ulti-{A}gent {R}eachability {C}alibration with {C}onformal {P}rediction}, 
    author={Anish Muthali and Haotian Shen and Sampada Deglurkar and Michael H. Lim and Rebecca Roelofs and Aleksandra Faust and Claire Tomlin},
    year={2023},
    eprint={2304.00432},
    archivePrefix={arXiv},
    primaryClass={eess.SY}
}

@incollection{gillula13,
    author = {Gillula, Jeremy H. and Tomlin, Claire J.},
    title = "{{R}educing {C}onservativeness in {S}afety {G}uarantees by {L}earning {D}isturbances {O}nline: {I}terated {G}uaranteed {S}afe {O}nline {L}earning}",
    booktitle = "{Robotics: Science and Systems VIII}",
    publisher = {The MIT Press},
    year = {2013},
    month = {07},
}

@misc{lin23,
    title={{V}erification of {N}eural {R}eachable {T}ubes via {S}cenario {O}ptimization and {C}onformal {P}rediction}, 
    author={Albert Lin and Somil Bansal},
    year={2023},
    eprint={2312.08604},
    archivePrefix={arXiv},
    primaryClass={cs.RO}
}

\end{document}